\pgfplotsset{compat = newest}
\newtheorem{mydef}{Definition}
\newtheorem{mytheo}{Theorem}
\newtheorem{mylemma}{Lemma}
\newtheorem{myproposition}{Proposition}
\newcommand{\R}{\mathbb{R}}
\newcommand{\N}{\mathbb{N}}
\newcommand{\C}{\mathcal{C}}
\newcommand{\X}{\mathcal{X}}
\newcommand{\U}{\mathcal{U}}
\newcommand{\bm}[1]{{\boldsymbol{#1}}}
\DeclareMathOperator{\diag}{diag}
\DeclareMathOperator{\var}{var}
\DeclareMathOperator{\Var}{\Sigma}
\DeclareMathOperator{\mean}{\mu}
\DeclareMathOperator{\expval}{E}
\newcommand{\GP}{\mathcal{GP}}
\title{\LARGE \bf
Equilibrium distributions and stability analysis of 
Gaussian Process State Space Models
}
\author{Thomas Beckers and Sandra Hirche
\thanks{T. Beckers and S. Hirche are with the Chair of Information-oriented Control (ITR), Department of Electrical and Computer Engineering,
Technical University of Munich, D-80333 Munich\newline
{\tt\small \{t.beckers, hirche\}@tum.de}}
}
\begin{document}

\maketitle
\thispagestyle{empty}
\pagestyle{empty}

\begin{abstract}
Gaussian Process State Space Models (GP-SSM) are a data-driven stochastic model class suitable to represent nonlinear dynamics. They have become increasingly popular in non-parametric modeling approaches since they provide not only a prediction of the system behavior but also an accuracy of the prediction. For the application of these models, the analysis of fundamental system properties is required. In this paper, we analyze equilibrium distributions and stability properties of the GP-SSM. The computation of equilibrium distributions is based on the numerical solution of a Fredholm integral equation of the second kind and is suitable for any covariance function. Besides, we show that the GP-SSM with squared exponential covariance function is always mean square bounded and there exists a set which is positive recurrent.
\end{abstract}

\section{Introduction}
The identification of a dynamical system plays a very import role in the area of control theory. The goal is the derivation of a mathematical model which is based on generated input data and the corresponding output data of a system. The model is necessary for any model-based control design, such as e.g. model predictive control. Besides, a model is required for simulations to evaluate control designs and to improve the understanding of the system. Classical system identification deals with parametric models, i.e. for linear dynamics ARX or ARMAX. If the system contains nonlinearities, there exist different identification techniques which mostly depends on the structure of the system. For these approaches, a suitable model structure must be selected to achieve useful results. Nevertheless, the identification of complex nonlinear systems with parametric models still poses a significant challenge. Especially, for complex systems such as human motion dynamics~\cite{wang2008gaussian} or the prediction of ozone concentration in the air~\cite{petelin2013evolving} non-parametric techniques appear to be more promising.\\
Within the past two decades, Gaussian Process regression has been used for modeling dynamical systems due to some beneficial properties such as the bias variance trade-off and the strong connection to Bayesian mathematics, see~\cite{frigola2014variational}. A Gaussian Process connects every point of a continuous input space with a normally distributed random variable. Any finite group of those infinitely many random variables follows a multivariate Gaussian distribution. Based on this, the result is a powerful tool for nonlinear function regression without the need of much prior knowledge~\cite{rasmussen2006gaussian}. In contrast to most of the other techniques, Gaussian Process modeling provides not only a mean function but also a measure for the uncertainty of the prediction. The output is a Gaussian distributed variable which is fully described by the mean and the variance.\\
The Gaussian Process State Space Model (GP-SSMs) uses this technique for modeling dynamical systems with state space models, see e.g.~\cite{kocijan2005dynamic}, where each state is described by its own GP. The model must be trained with input-output pairs of the system. Afterwards, the GP-SSM can predict the next step ahead state. Although the application of Gaussian Process State Space Models increases in control theory, e.g. for adaptive control~\cite{rogers2011adaptive}, the theoretical properties of the GP-SSM are only sparsely researched.\\
In most of the works where a GP-SSM is considered in a control setting only the mean function of the process is employed, see e.g.~\cite{wang2005gaussian} and~\cite{chowdhary2013bayesian}. This is mainly because the GP is often used for replacing an other deterministic method. In~\cite{beckers:ecc2016} some basic theoretical properties for deterministic GP-SSMs are derived. However, GP-SSMs contain a much richer description of the underlying dynamics but also the uncertainty about the model itself when the full stochastic representation is considered. In the stochastic case, also prediction uncertainty can be used to determine a suitable control law. For example, in~\cite{medina2015synthesizing} control laws are derived which explicitly take the uncertainty into account. In order to ensure the applicability of these control settings, classical control theory properties are required, see e.g.~\cite{kocijan2016modelling} and~\cite{avzman2008non}. Such basic properties of a dynamical system are among others the existence of equilibria and stability conditions.\\
For the control of GP-SSMs, knowledge about stochastic stability is essential. However, the calculation of equilibrium distributions of Gaussian Process State Space Models and the derivation of stability conditions are still open problems.\\
The contribution of this paper is the study of equilibrium distributions and stability of Gaussian Process State Space Models. We present a method to compute the equilibrium distribution which is based on the solution of a Fredholm integral equation. The method is usable for arbitrary covariance functions. For the widespread squared exponential covariance function, we present an upper bound in mean square sense and a set which is positive recurrent. We show that it is only possible to learn bounded systems with a GP-SSM with squared exponential covariance function. The derived results are illustrated in numerical simulations. 
\subsection{Notation}
Vectors and vector-valued functions are denoted with bold characters. Matrices are described with capital letters. The expression~$A_{:,i}$ denotes the i-th column of the matrix A. The expression~$\mathcal{N}(\mu,\Sigma)$ describes a normal distribution with mean~$\mu$ and covariance~$\Sigma$. The field of non-negative real numbers is denoted by~$\R_{\geq 0}$, positive real numbers by~$\R_{>0}$, and natural numbers without zero by~$\N_{>0}$. The Euclidean norm is given by~$\Vert\cdot\Vert$.
\section{Definitions}
This section starts with the necessary definitions of Gaussian Processes and Gaussian Process State Space Models.  
\subsection{GP Definition}
Let~$(\Omega, \mathcal{F},P)$ be a probability space with the sample space~$\Omega$, the corresponding~$\sigma$-algebra~$\mathcal{F}$ and the probability measure~$P$. The set~$\X \subseteq \R^n$ with~$n\in\N_{>0}$ denotes the index set. A stochastic process is a discrete or real valued function~$f(\bm x, \omega)$ which is a measurable function of~$\omega\in\Omega$ with~$\bm x\in\X$. The function~$f(\bm x, \omega)$ becomes a deterministic function if~$\omega\in\Omega$ is fixed which is called realization or sample part. In contrast, the function~$f(\bm x, \omega)$ is a random variable on~$\Omega$ when~$\bm x\in\X$ is specified and is denoted by~$f(\bm x)$. A Gaussian Process is such a stochastic process which is fully described by a mean function~$m(\bm x)\in\C^0$ and a covariance function~$k(\bm x,\bm x^\prime)\in\C^0$ since with fixed~$\bm x$ it is Gaussian distributed.
\begin{align}
f(&\bm x) \sim \GP(m(\bm x),k(\bm x,\bm x^\prime)),\qquad \bm x,\bm x^\prime\in\X\\
m(&\bm x)\colon\X\to\R,\qquad k(\bm x,\bm x^\prime)\colon\X\times \X\to\R
\end{align}
The mean function is usually defined to be zero, see~\cite{rasmussen2006gaussian}. The covariance function is a measure for the correlation of two states~$(\bm x,\bm x^\prime)$. The covariance function depends on so called hyperparameters whose number depends on the used function. The choice of the covariance function and the determination of the corresponding hyperparameters can be seen as degrees of freedom of the regression. Table \ref{tab:kernel} presents some common applied covariance functions.
\renewcommand{\arraystretch}{1.3}
\begin{table}[ht]
\hspace{1cm}
\begin{center}
\begin{tabular}{|m{1.7cm}|m{2.8cm}|m{2.9cm}|}
\toprule 
Name & Covariance function~$k(\bm x,\bm x^\prime)$ & Hyperparameters $\varphi=\{\ldots\}$\\ 
\midrule
linear &~$\bm x^\top \bm x^\prime+ \sigma_0^2$ &~$\sigma_0\in\R_+$ \\ 
polynomial &~$\left(\bm x^\top \bm x^\prime+ \sigma_0^2 \right) ^p$ &~$\sigma_0\in\R_+$\\ 
squared \mbox{exponential} &~$\sigma_f^2 \exp{\left(-\frac{\Vert \bm x- \bm x^\prime \Vert^2}{2l^2} \right) }$ &~$\sigma_f\in\R_{\geq 0},l\in\R_{>0}$\\ 
\bottomrule
\end{tabular} 
\end{center}
\caption{Summary of some commonly-used covariance functions.\label{tab:kernel}}
\end{table}
\renewcommand{\arraystretch}{1}\\
Probably the most widely used covariance function in Gaussian Process modeling is the squared exponential covariance function, see~\cite{rasmussen2006gaussian}, with the hyperparameters~$[l,\sigma_f]$. The length-scale~$l$ determines the number of expected upcrossing of the level zero in a unit interval by a zero-mean GP. The signal variance~$\sigma_f^2$ describes the average distance of the function~$f(\bm x)$ away from its mean. This covariance function is smooth, which leads to good results for modeling physical dynamics. An overview of the advantages of the different covariance functions can be found in~\cite{bishop2006pattern}.
\subsection{Gaussian Process State Space Models}
In this paper, we use discrete time Gaussian Process State Space Models with the~$n$-dimensional state vector~$\bm x_k\in\X$.
\begin{align}
\bm x_{k+1}&=\bm f(\bm x_k),\qquad k\in\N \label{for:GPSSS}\\
\bm y_k&=\bm x_k +\bm\varepsilon_k\\
\bm f(\bm x_k)&\sim \GP(\bm m(\bm x_k),\bm k(\bm x_k,\bm x^\prime_k))\\
\bm\varepsilon_k&\sim\mathcal{N}(\bm 0,\diag (\sigma_{1,n}^2,\ldots,\sigma_{n,n}^2))
\end{align}
Since the output of a Gaussian Process is one dimensional, a~$n$-dimensional system requires~$n$ GPs. Therefore, the vector valued function~$\bm m(\cdot)=[m_1(\cdot),\ldots,m_n(\cdot)]^\top$ describes the mean functions for each component of~$\bm x_{k+1}$. The Gaussian Process for each state depends on the corresponding mean and covariance function and is given by 
\begin{align}
\bm f(\bm x_k)&=
\begin{cases} 
f_1(\bm x_k)\sim \GP(m_1(\bm x_k),k_{\varphi_1}(\bm x_k,\bm x^\prime_k))\\
\vdots\hspace{0.9cm}\vdots\hspace{0.5cm}\vdots\\
f_n(\bm x_k)\sim \GP(m_n(\bm x_k),k_{\varphi_n}(\bm x_k,\bm x^\prime_k)).
\end{cases}
\end{align}
with the set of hyperparameters~$\varphi_i$. The GP-SSM has to be trained with an input and a corresponding output set. For this purpose, we arrange the~$m$ training inputs~$\{\tilde{\bm x}_{j}\}_{j=1}^m$ and corresponding outputs~$\{\tilde{\bm y}_{j+1}\}_{j=1}^m$ in an input training matrix~$X=[\tilde{\bm x}_{1},\tilde{\bm x}_{2},\ldots,\tilde{\bm x}_{m}]$ and an output training matrix~$Y^\top=[\tilde{\bm y}_{2},\tilde{\bm y}_{3},\ldots,\tilde{\bm y}_{m+1}]$. Therefore, the training data for the Gaussian Processes is described by~$\mathcal D=\{X,Y\}$. The prediction for each component $i$ of the one step ahead state vector~$x_{i,k+1}$ is calculated as Gaussian distributed variable with the conditional mean~$\mean(x_{i,k+1}\vert x_{i,k},\mathcal D)$ and the conditional variance~$\var(x_{i,k+1}\vert x_{i,k},\mathcal D)$. The joint distribution of the $i$-th component of the predicted next step ahead state~$x_{i,k+1}$ and the corresponding vector of the training outputs~$Y$ is 
\begin{align}
\begin{bmatrix} Y_{:,i} \\ x_{i,k+1} \end{bmatrix}\sim \mathcal{N} \left(\bm 0, \begin{bmatrix} K_{\varphi_i}(X,X) & \bm k_{\varphi_i}(\bm x_k,X)\\ \bm k_{\varphi_i}(\bm x_k,X)^\top & k_{\varphi_i}(\bm x_k,\bm x_k) \end{bmatrix}\right)\label{for:joint}
\end{align} 
where~$Y_{:,i}$ is the~$i$-th column of the matrix~$Y$. The function~$K_{\varphi_i}(X,X)$ is called covariance matrix and~$\bm k_{\varphi_i}(\bm x_k,X)$ the vector-valued extended covariance function with the set of hyperparameters~$\varphi_i$. 
\begin{align}
\begin{split}
&K_{\varphi_i}(X,X)\colon\X^m\times \X^m\to\R^{m\times m}\\
&K_{j',j}= k_{\varphi_i}(X_{:, j'},X_{:, j})\\
&\bm k_{\varphi_i}(\bm x_k,X)\colon\X\times \X^m\to\R^m,\,k_{j} = k_{\varphi_i}(\bm{x}_k,X_{:, j})\\
&\forall j',j\in\lbrace 1,\ldots,m\rbrace,i\in\lbrace 1,\ldots,n\rbrace .
\end{split}
\end{align}
With the assumption that the mean functions of the GPs are set to zero, a prediction of the~$i$-th component of~$\bm x_{k+1}$ is derived from the joint distribution~\eqref{for:joint}, see~\cite{rasmussen2006gaussian} for more details. This conditional probability distribution is Gaussian with the conditional mean
\begin{align}
\mean_i(\bm x_{k+1}\vert \bm x_k,\mathcal D)&=\bm k_{\varphi_i}(\bm x_k,X)^\top\bm h(i) \label{for:meanvalue}\\
\text{with }\bm h(i)& =(K_{\varphi_i}+I \sigma^2_{n,i})^{-1}Y_{:,i}\notag
\end{align}
where $\bm h(i)$ denotes the part which is independent of $\bm x_k$. The variance of the prediction is given by
\begin{align}
\var_i(\bm x_{k+1}\vert \bm x_k,\mathcal D)&=k_{\varphi_i}(\bm x_k,\bm x_k)-\bm k_{\varphi_i}(\bm x_k,X)^\top \notag\\
& \phantom{{}=}(K_{\varphi_i}+I \sigma^2_{i,n})^{-1} \bm k_{\varphi_i}(\bm x_k,X).\label{for:varvalue}
\end{align}
The variable~$\sigma_{i,n}\in\R$ is the standard deviation of the noise of the input data for all~$i\in\{1,\ldots,m\}$. The hyperparameters~$\varphi_i$ are optimized by means of the likelihood function, thus by maximizing the probability of 
\begin{align}
\varphi_i^* = \arg\max_{\varphi_i} \log P(Y_{:,i}|X,\varphi_i).
\end{align}
The~$n$ normally distributed components of $x_{i,k+1}\vert \bm x_k,\mathcal D$ are combined in a multi-variable Gaussian distribution 
\begin{align}
\bm x_{k+1}\vert \bm x_k,\mathcal D &\sim \mathcal{N} (\bm\mean(\cdot),\Var(\cdot))\\
\bm \mean(\bm x_{k+1}\vert \bm x_k,\mathcal D)&=[\mean_1(\cdot),\ldots,\mean_n(\cdot)]^\top\\
\Var(\bm x_{k+1}\vert \bm x_k,\mathcal D)&=\diag(\var_1(\cdot),\ldots,\var_n(\cdot)).
\end{align} 
Hence, the system~\eqref{for:GPSSS} can be rewritten as affine stochastic system with state depended noise
\begin{align}
\bm x_{k+1}=\bm \mean(\bm x_{k+1}\vert \bm x_k,\mathcal D)+\Var(\bm x_{k+1}\vert \bm x_k,\mathcal D)\bm\eta
\label{for:stochasticsystem}
\end{align} 
with the normally distributed random variable~$\bm \eta \sim \mathcal{N} \left(\bm 0,I\right)$.
\section{Equilibrium distribution}
The analysis of equilibrium points of stochastic systems requires first of all a definition of the stochastic equilibrium. If the variance is neglected, a deterministic approach can be used. To consider the stochastic behavior of the state variable, an equilibrium can be defined by an invariant distribution of the current state~$\bm x_k$ and the next state~$\bm x_{k+1}$.\\
Assume that the current state is a random variable~$\bm x_k$ with probability distribution~$p(\bm x_k)$. The predictive distribution is calculated by marginalizing over the state vector~\cite{girard2003gaussian}.
\begin{align}
p(\bm x_{k+1})=\int p(\bm x_{k+1}\vert \bm x_k,\mathcal D_i)p(\bm x_k)d \bm x_k
\label{for:preddist}
\end{align}
The probability distribution~$p(\bm x_{k+1}\vert \bm x_k,\mathcal D_i)$ is Gaussian 
\begin{align}
p(\bm x_{k+1}\vert \bm x_k,\mathcal D)=\mathcal{N}(\bm\mu(\bm x_{k+1}\vert\bm x_k,\mathcal D),\Var(\bm x_{k+1}\vert\bm x_k,\mathcal D))
\end{align}
with~\eqref{for:meanvalue} and~\eqref{for:varvalue}. An analytic solution of the integral is generally not possible but still obtained for some special cases, e.g. if the distribution~$p(\bm x_k)$ is also normal. Therefore, a solution for arbitrary distributions of~$\bm x_k$ can in general be found by numerical computation only. \\
To qualify as an equilibrium, the distribution of~$\bm x_k$ and~$\bm x_{k+1}$ must be equal. This condition transforms the predictive distribution equation into a linear, homogeneous Fredholm integral equation of the second kind.\\
The definition of this integral equation is given by 
\begin{align}
\underbrace{u(\bm x_{k+1})}_{p(\bm x_{k+1})}=\lambda\int{\underbrace{H(\bm x_{k+1},\bm x_k)}_{p(\bm x_{k+1}\vert \bm x_k,\mathcal D)}\underbrace{u(\bm x_k)}_{p(\bm x_k)}d\bm x_k}
\end{align}
where~$\lambda\in\R,\, \bm x_k,\bm x_{k+1}\in\X$ and~$H\colon\X\times\X\to\R$ are known piece-wise continuous functions while~$u\colon\X\to\R$ is an unknown function. The function~$H(\cdot)$ is known as kernel and~$\lambda$ is the eigenvalue.\\
In the following, we assume a one-dimensional system with~$\X=\R$. The extension to the multidimensional case is presented at the end of this section. A numerical solution of the integral equation can be found by using the Nystr\"om method, which approximates the integral by a finite sum, e.g. trapezoid rule, see~\cite{jerri1999introduction}. For this approach, the integral has to be defined on a finite interval~$[a,b]$ with~$a,b\in\R$ and the function~$H(x_{k+1},x_k)$ must be continuous. The length of the interval should be chosen sufficient large. The interval is divided in~$q$ equal parts of width~$\Delta x=\frac{b-a}{q}$. Additionally, let~$x_{i}=a+i\Delta x$ and~$x_{q}=b$.\\
The solution $u(\cdot)$ of the integral equation can be approximated by the matrix equation~$M\bm u=\bm 0$ with $M\in\R^{q\times q}$ and vector $\bm u\in\R^q$
\begin{align}
M&=\begin{bmatrix}
\frac{1}{\lambda}-\frac{\Delta x}{2}H_{0,0} & -\Delta x H_{0,1} & \ldots & -\frac{\Delta x}{2} H_{0,q}\\
-\frac{\Delta x}{2}H_{1,0} & \frac{1}{\lambda}-\Delta x H_{1,1} & \ldots & -\frac{\Delta x}{2} H_{1,q}\\
\vdots&\vdots&\ddots&\vdots\\
-\frac{\Delta x}{2}H_{q,0} & -\Delta x H_{q,1} & \ldots & \frac{1}{\lambda}-\frac{\Delta x}{2} H_{q,q}\\
\end{bmatrix}\hspace{0.5cm}
\label{for:intapprox}
\end{align}
where~$H_{i,j}=H(x_i,x_j)$ for~$i,j=0,1,\ldots,q$. The vector~$\bm u$ contains the approximation of the function values of~$u(\cdot)$ at~$x_i$. There exists an infinite number of non-zero solutions if and only if~$\det M=0$. This condition must be satisfied for~$\lambda=1$ to fulfill~\eqref{for:preddist}. Additionally,~$u(\cdot)$ must satisfy the constraints for a probability distribution.
\begin{align}
\int u(x_{k})dx_{k}&=1 \text{ and } u(x_{k})\geq 0,\,\forall x_{k}\in\R\label{for:constraint}
\end{align}
To find an appropriate solution, the linear equation~$M\bm u=\bm 0$ with the constraints~\eqref{for:constraint} must be solved. We use again the trapezoid rule to discretize the constraints
\begin{align}
&\Delta x\sum_0^{q} u_i-\frac{\Delta x}{2}(u_0+u_n)=1\label{for:const1}\\
&u_i\geq 0,\,\forall i=0,1,\dots,q
\end{align}
and add the constraint given by equation~\eqref{for:const1} to the matrix~$M$. The result is a non-homogeneous system of linear equations which can be formulated as least square optimization problem.
\begin{align}
\min_{\bm u}&\Vert M_p\bm u-\bm b_p\Vert^2\text{ with } u_i\geq 0,\,\forall i=0,1,\dots,q\\
M_p&=\begin{bmatrix}
&&M&&\\
\frac{\Delta x}{2} & \Delta x & \ldots & \Delta x & \frac{\Delta x}{2}
\end{bmatrix}\\
\bm b_p&=[0,\ldots,0,1]^T
\end{align}
If the residual of the optimization is sufficiently small, the vector~$\bm u$ is a discrete approximation of~$p(x_k)$ at
\begin{align}
x_k=a+i\Delta x\text{ for }i=0,1,\ldots,q
\end{align}
which solves equation~\eqref{for:preddist}.\\
If the system has more than one dimension, the numerical integration scheme for the Fredholm integral equation must be adapted. Generally, a numerical approximation for an integral of a continuous function~$g:D\to\R$ over a closed and bounded set~$D$ in~$\R^n$ is given by
\begin{align}
\int_D g(s)ds &\approx \sum_{i=0}^{q} w_{i} g(t_{i})
&\text{with }w_i\in\R,\,t_i\in D.
\end{align}
The used numerical approximation must converge to the true integral for~$q\to\infty$ to be valid for the presented algorithm, e.g. the multidimensional trapezoid rule satisfies that. With this approach, equation~\eqref{for:intapprox} is straightforward adopted to be applied in higher dimensional systems.\\
Algorithm (\ref{alg:u}) describes the whole computation in higher dimensional systems of the discrete approximation of~$p(\bm x_k)$.
\begin{algorithm}
\caption{Equilibrium Distribution \label{alg:u}}
\begin{algorithmic}
\State~$\bm b_p\leftarrow [0,\ldots,0,1]^T$
\For{q\,{$\{\bm x_{k+1}\}_i\in D$}}
	\For{q\,{$\{\bm x_{k}\}_j\in D$}}
	\State~$\bm \mean(\bm x_{k+1}) \leftarrow \bm k_{\varphi}^\top (K_{\varphi}+I\sigma^2_{n})^{-1}Y$
	\State~$\bm \var(\bm x_{k+1})  \leftarrow k_{\varphi}-\bm k_{\varphi}^\top (K_{\varphi}+I \sigma^2_{n})^{-1} \bm k_{\varphi}$
	\State~$H_{i,j}\leftarrow p(\bm x_{k+1}\vert \bm x_k,\mathcal D_i)$ 
	\EndFor
\EndFor
\State~$M=I-\text{weighted} \begin{pmatrix} H_{0,0} & \ldots & H_{0,q}\\ \vdots & \ddots & \vdots\\ H_{q,0} & \ldots & H_{q,q} \end{pmatrix}$
\If{$\det{M}\neq 0$}
\State \textbf{return No solution}
\Else
\State~$M_p=\left[\begin{array}{c}{M}\\ \text{normalized weights}\end{array}\right]$
\State~$\min_{\bm u}\Vert M_p\bm u-\bm b_p\Vert_2^2\text{ with } u_i\geq 0,\,\forall i=0,1,\dots,q$
\State \textbf{return}~$\bm u$
\EndIf
\end{algorithmic}
\end{algorithm}
\subsection{Remarks on convergence}
For a numerical approach it is important to analyze the convergence of the algorithm. The following proposition ensures that this condition is fulfilled.
\begin{myproposition}
Assume a finite interval~$[a,b]$ with boundaries~$a,b\in\R$ and a continuous solution~$p(x_{k+1})=p(x_k)$ of the integral equation
\begin{align*}
p(x_{k+1})&=\int_a^b p(x_{k+1}\vert x_k,\mathcal D_i)p(x_k)d x_k.
\end{align*}
The numerical solution~$p_q(x_k)$ given by the Nystr\"om method with the trapezoid rule converges to the exact solution~$p(x_k)$ if the step size~$\Delta x=\frac{b-a}{q}\to 0$ with~$q\to\infty$ and
\begin{align*}
\Delta x&\sum_{i=0}^{q} p_q(a+i\Delta x)-\frac{\Delta x}{2}(p_q(a)+p_q(b))\\
&=\int_a^b p(x_k)d x_k\\
&p_q(a+i\Delta x)\geq 0,\,\forall i=0,1,\dots,q.
\end{align*}
\label{prop:convergence}
\end{myproposition}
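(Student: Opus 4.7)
The plan is to frame Proposition~\ref{prop:convergence} as a direct application of the classical Nyström convergence theorem for Fredholm integral equations of the second kind with a continuous kernel on a compact domain, together with the convergence of the trapezoid quadrature applied to the probability-mass normalization constraint.

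First, I would verify the hypotheses of the standard convergence theorem. The kernel $H(x_{k+1},x_k)=p(x_{k+1}\vert x_k,\mathcal D)$ is a Gaussian density whose conditional mean~\eqref{for:meanvalue} and conditional variance~\eqref{for:varvalue} are continuous functions of $x_k$ on the compact interval $[a,b]$; in fact they are $C^\infty$ whenever the underlying covariance function is smooth, so in particular $H\in C([a,b]\times[a,b])$. Consequently the integral operator $(Tu)(x)=\int_a^b H(x,s)u(s)\,ds$ is a compact linear operator on $C[a,b]$ equipped with the supremum norm, and $p(x_k)$ is by hypothesis a continuous fixed point of $T$.

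Second, I would invoke the defining property of the composite trapezoid rule: for every $g\in C[a,b]$, the trapezoid approximation used in~\eqref{for:intapprox} converges to $\int_a^b g(s)\,ds$ as $\Delta x \to 0$. Together with continuity of $H$, this implies that the finite-rank Nyström operators $T_q$ associated with the matrix $M$ in~\eqref{for:intapprox} converge to $T$ in the collectively compact sense of Anselone. Standard spectral perturbation theory for compact operators then yields a sequence of discrete eigenpairs $(\lambda_q,p_q)$ of $T_q$ with $\lambda_q\to 1$ and, after normalization, $p_q\to p$ uniformly on $[a,b]$; this is the content of the classical Nyström convergence theorem and delivers exactly the pointwise convergence claimed. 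The discrete normalization constraint~\eqref{for:const1} and the non-negativity constraints converge to their continuous counterparts in~\eqref{for:constraint} by the same quadrature convergence, and the least-squares residual vanishes in the limit because $p$ exactly satisfies both $Tp=p$ and the probability constraints.

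The main obstacle I expect is the homogeneous nature of the equation: since eigenfunctions are only defined up to scaling, one has to argue that the eigenvalue $\lambda=1$ is isolated and (generically) simple in the spectrum of $T$ so that the one-dimensional eigenspace of $T_q$ near $\lambda_q=1$ is correctly selected and the normalization~\eqref{for:const1} pins down a unique limit. For generic training data $\mathcal D$ this simplicity is typical; when the eigenspace has higher multiplicity, the argument yields convergence of subspaces rather than of individual vectors, and the probability constraints~\eqref{for:constraint} pick out the unique invariant density among them.
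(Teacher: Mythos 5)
Your proposal is correct and ends up in the same place as the paper --- both arguments reduce to classical Nystr\"om convergence theory for a continuous kernel on a compact interval, citing the same standard sources --- but the two routes differ in an instructive way. The paper introduces the integral operator $\mathcal{H}$ and its trapezoid discretization $\mathcal{H}_n$, quotes the Euler--Maclaurin error expansion $(\mathcal{H}-\mathcal{H}_n)p=-\tfrac{\Delta x^2}{12}[\cdot]+O(\Delta x^4)$, and then applies the second-kind error bound $\Vert p-p_q\Vert_\infty\le c_s\Vert(\mathcal{H}-\mathcal{H}_n)p\Vert$; this has the advantage of delivering an explicit $O(\Delta x^2)$ rate. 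What the paper does not address is that the equation here is \emph{homogeneous} with $\lambda=1$ an eigenvalue of $\mathcal{H}$, so $(I-\mathcal{H})$ is not invertible and the quoted bound --- which presupposes a uniformly bounded inverse $(I-\mathcal{H}_n)^{-1}$, i.e.\ that $\lambda$ is in the resolvent set --- does not apply verbatim; the finiteness of $c_s$ is exactly the delicate point. You identify this obstacle explicitly and route around it correctly: collectively compact convergence of the Nystr\"om operators plus spectral perturbation theory for compact operators gives convergence of the discrete eigenpair near $\lambda=1$, with the normalization and non-negativity constraints selecting the right representative from the (generically one-dimensional) eigenspace. In that respect your argument is the more rigorous of the two for this eigenvalue-type problem, at the cost of losing the quantitative rate the paper's Euler--Maclaurin step provides.
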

\begin{proof}
We start with the definition of the integral operator~$\mathcal{H}$ and the numerical integral operator~$\mathcal{H}_n$.
\begin{align}
\mathcal{H}p(x_{k+1})&=\int_a^b p(x_{k+1}\vert x_k) p(x_k)d x_k\\
\mathcal{H}_n p(x_{k+1})&=\sum_{i=0}^q w_i p(x_{k+1}\vert a+i\Delta x) p(a+i\Delta x)\\
&\text{with } x_{k+1}\in [a,b],w_i\in\R\notag
\end{align}
If the numerical integrator operator bases on the trapezoid rule with step size~$\Delta x=\frac{b-a}{q}$
\begin{align*}
\int_a^b g(x)dx\approx \Delta x \sum_{i=0}^q g(a+i\Delta x)-\frac{\Delta x}{2}(g(a)+g(b)),
\end{align*}
the numerical integral converges to the true integral for any continuous function~$g$, see~\cite{stoer2013introduction}. The speed of convergence of~$p_q(x_k)$ to the exact solution depends on the numerical integration error of the trapezoid rule.
\begin{align}
(\mathcal{H}-\mathcal{H}_n) p(x_k)&=-\frac{\Delta x^2}{12}\left[ \frac{\delta p(x_{k+1}\vert x_k)p(x_k)}{\delta x_k}\right]_{x_k=a}
^{x_k=b}\notag\\
&+O(\Delta x^4)
\end{align}
Since the difference~$(\mathcal{H}-\mathcal{H}_n) p(x_k)$ converges to zero for~$q\to\infty$ and~$\Vert p(x_k)-p_q(x_k)\Vert_\infty \leq c_s \Vert (\mathcal{H}-\mathcal{H}_n) p(x_k)\Vert$ with a constant~$c_s<\infty$, the numerical solution~$p_n(x_k)$ tends to the exact solution~$p(x_k)$, see~\cite{atkinson1997numerical}. 
\end{proof}
\section{Stability}
The previous section deals with the numerical computation of equilibrium distributions. Another important property of dynamical systems is stability. Several different stability measures exist for stochastic systems. This section makes use of the widespread mean square measure and positive recurrent sets. Since GP-SSMs are often used in combination with the squared exponential covariance function, the following stability analysis is focused on such kind of systems. We start with some definitions, see~\cite{kushner1971introduction}:
\begin{mydef}
A discrete-time dynamical system is called mean square bounded, if the solution~$\bm x_k~$ for~$k\in\N$ is bounded with~$\sup_{k\in\N} \expval \left[ \left\Vert \bm x_k \right\Vert^2 \right] < \infty$.
\end{mydef}
\begin{mydef}
The nonempty and measurable set~$\Lambda\subset\R^n$ is called positive recurrent if
\begin{align}
\sup_{\bm x_k\in\Lambda} \expval(\tau_\Lambda)<\infty
\end{align}
where~$\tau_\Lambda=\inf\left\lbrace k\geq 1\colon \bm x_k\in\Lambda\right\rbrace$ is the first return time to~$\Lambda$ if~$\bm x_0\in\Lambda$ and the first hitting time, otherwise.
\end{mydef}
\begin{mytheo} 
\label{theo:1}
A GP-SSM~\eqref{for:stochasticsystem} with squared exponential covariance function 
\begin{align*}
k_{\varphi_i}(\bm x,\bm x^\prime)=\sigma_{i,f}^2 \exp{\left(-\frac{\Vert \bm x- \bm x^\prime \Vert^2}{2l_i^2} \right) },\,\bm x,\bm x'\in\X
\end{align*}
where~$\sigma_{i,f}\in\R_{\geq 0}$ and $l_i\in\R_{>0}$ for all~$i\in\{1,\ldots,n\}$ with a number of $m$ training points is mean square bounded by
\begin{align*}
\sup_{k\in\N_{>0}} \expval \left[ \left\Vert \bm x_k \right\Vert^2 \right]\leq \sum_{i=1}^n \sigma_{i,f}^4 m \Vert \bm h(i)\Vert^2+\sigma_{i,f}^2.
\end{align*}
\end{mytheo}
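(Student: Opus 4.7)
The plan is to bound the conditional second moment $\expval[\|\bm x_{k+1}\|^2 \mid \bm x_k]$ uniformly in $\bm x_k$, so that the bound survives taking an outer expectation and a supremum over $k$. From the stochastic representation \eqref{for:stochasticsystem} together with the independence of $\bm\eta$ from $\bm x_k$, each component of $\bm x_{k+1}$ is conditionally Gaussian with mean $\mean_i(\bm x_{k+1}\mid \bm x_k,\mathcal D)$ and variance $\var_i(\bm x_{k+1}\mid \bm x_k,\mathcal D)$ given in \eqref{for:meanvalue} and \eqref{for:varvalue}. Hence $\expval[x_{i,k+1}^2\mid\bm x_k]=\mean_i^2+\var_i$, and the task reduces to bounding these two quantities pointwise in $\bm x_k$.

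The key structural observations exploit the squared exponential kernel. First, $k_{\varphi_i}(\bm x,\bm x)=\sigma_{i,f}^2$ for all $\bm x\in\X$, and $0\leq k_{\varphi_i}(\bm x,\bm x')\leq \sigma_{i,f}^2$ for all pairs, since the exponential factor lies in $(0,1]$. Consequently each of the $m$ components of the vector $\bm k_{\varphi_i}(\bm x_k,X)$ is bounded in absolute value by $\sigma_{i,f}^2$, so $\|\bm k_{\varphi_i}(\bm x_k,X)\|^2 \leq m\sigma_{i,f}^4$ uniformly in $\bm x_k$. Applying Cauchy--Schwarz to the mean in \eqref{for:meanvalue} then gives
\begin{align*}
\mean_i(\bm x_{k+1}\mid\bm x_k,\mathcal D)^2 \leq \|\bm k_{\varphi_i}(\bm x_k,X)\|^2\,\|\bm h(i)\|^2 \leq m\sigma_{i,f}^4\,\|\bm h(i)\|^2.
\end{align*}

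For the conditional variance, I would argue that the subtracted quadratic form in \eqref{for:varvalue} is non-negative: the Gram matrix $K_{\varphi_i}$ is positive semi-definite and the added $I\sigma_{i,n}^2$ makes $(K_{\varphi_i}+I\sigma_{i,n}^2)^{-1}$ positive definite, so the quadratic form with vector $\bm k_{\varphi_i}(\bm x_k,X)$ is $\geq 0$. Therefore $\var_i(\bm x_{k+1}\mid\bm x_k,\mathcal D) \leq k_{\varphi_i}(\bm x_k,\bm x_k) = \sigma_{i,f}^2$, again uniformly in $\bm x_k$.

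Putting the two bounds together, $\expval[\|\bm x_{k+1}\|^2\mid\bm x_k] = \sum_{i=1}^n(\mean_i^2+\var_i) \leq \sum_{i=1}^n(\sigma_{i,f}^4 m\|\bm h(i)\|^2+\sigma_{i,f}^2)$, and the right-hand side is independent of $\bm x_k$ and of $k$. Using the tower property $\expval[\|\bm x_{k+1}\|^2] = \expval[\expval[\|\bm x_{k+1}\|^2\mid\bm x_k]]$ and then taking the supremum over $k\in\N_{>0}$ yields the claimed bound. There is no real obstacle; the main conceptual step is recognizing that stationarity of the squared exponential kernel makes both $k_{\varphi_i}(\bm x_k,\bm x_k)$ and the entries of $\bm k_{\varphi_i}(\bm x_k,X)$ globally bounded, which turns the conditional mean and variance into state-independent bounds and thereby prevents any explosion of the second moment despite the GP-SSM being driven by state-dependent noise.
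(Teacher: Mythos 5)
Your proposal is correct and follows essentially the same route as the paper's own proof: decompose the conditional second moment as $\sum_i(\mean_i^2+\var_i)$, bound the mean via Cauchy--Schwarz using the uniform bound $|k_{\varphi_i}(\bm x_k, X_{:,j})|\leq\sigma_{i,f}^2$ on the squared exponential kernel, and bound the variance by $\sigma_{i,f}^2$ via positive definiteness of $(K_{\varphi_i}+I\sigma_{i,n}^2)^{-1}$. If anything, your explicit invocation of the tower property to pass from the state-uniform conditional bound to the unconditional supremum over $k$ is slightly more careful than the paper's presentation.
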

Before starting the proof, we introduce important properties of the squared exponential covariance function~$k_{\varphi}(\bm x,\bm x^\prime)$.
\begin{mylemma}
For all~$\sigma_f\in\R_{\geq 0},l\in\R_{>0}$ the squared exponential covariance function is bounded, see \cite{beckers:ecc2016}, with
\begin{align}
\sup_{\bm x,\bm x^\prime\in\R^n}k_{\varphi}(\bm x,\bm x^\prime)&=\left. \sigma_{f}^2 \exp{\left(-\frac{\Vert \bm x-\bm x^\prime \Vert^2}{2l^2} \right)}\right|_{\bm x=\bm x^\prime}=\sigma_{f}^2 \label{for:covfuncbound}\\
\inf_{\bm x,\bm x^\prime\in\R^n}k_{\varphi}(\bm x,\bm x^\prime)&=\hspace{-0.3cm}\lim_{\Vert \bm x-\bm x^\prime\Vert\to\infty}\hspace{-0.2cm}\sigma_{f}^2 \exp{\left(-\frac{\Vert \bm x-\bm x^\prime \Vert^2}{2l^2} \right)}=0.
\end{align}
Therefore, the mean of the GP tends to zero and the variance is bounded when the states are far away from the training set. We use these properties for the following proof.
\label{lemma:bounds}
\end{mylemma}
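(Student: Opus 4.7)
The plan is to verify the two extremal values by elementary analysis of the scalar function $\varphi(t)=\sigma_f^2\exp(-t/(2l^2))$ evaluated at $t=\Vert\bm x-\bm x^\prime\Vert^2$. Since $l>0$ and $\sigma_f\geq 0$, the function $\varphi$ is continuous and strictly decreasing in $t$ on $[0,\infty)$, with $\varphi(0)=\sigma_f^2$ and $\lim_{t\to\infty}\varphi(t)=0$. The squared distance $t=\Vert\bm x-\bm x^\prime\Vert^2$ ranges exactly over $[0,\infty)$ as $(\bm x,\bm x^\prime)$ ranges over $\R^n\times\R^n$, so the image of $k_\varphi$ is exactly $\varphi([0,\infty))=(0,\sigma_f^2]$ (or $\{0\}$ in the degenerate case $\sigma_f=0$). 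This immediately gives both extrema; it only remains to write each assertion carefully.

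For the supremum, I would note that $\Vert\bm x-\bm x^\prime\Vert^2\geq 0$ implies $\exp(-\Vert\bm x-\bm x^\prime\Vert^2/(2l^2))\leq \exp(0)=1$, and therefore $k_\varphi(\bm x,\bm x^\prime)\leq \sigma_f^2$ for all $\bm x,\bm x^\prime\in\R^n$. Equality holds when $\bm x=\bm x^\prime$ (the argument of the exponential vanishes), so the supremum is attained and equals $\sigma_f^2$, as displayed in the first equation of the lemma statement.

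For the infimum, I would first observe that $k_\varphi(\bm x,\bm x^\prime)\geq 0$ for every $(\bm x,\bm x^\prime)$ because both $\sigma_f^2$ and the exponential are non-negative. To show the lower bound cannot be improved, I would pick any fixed direction $\bm e\in\R^n$ with $\Vert\bm e\Vert=1$ and consider the sequence $\bm x^\prime_m = \bm x + m\bm e$; then $\Vert\bm x-\bm x^\prime_m\Vert = m\to\infty$, and continuity of $\exp$ gives $k_\varphi(\bm x,\bm x^\prime_m)=\sigma_f^2\exp(-m^2/(2l^2))\to 0$. Hence $\inf k_\varphi=0$, realized only in the limit, which matches the second displayed equation.

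Honestly, there is no genuine obstacle here: the result is a one-variable calculus statement about $e^{-t}$ composed with a non-negative function, and this is exactly why the lemma is cited to our earlier work \cite{beckers:ecc2016} rather than reproved from scratch. The only thing to be slightly careful about is the degenerate case $\sigma_f=0$, where $k_\varphi\equiv 0$ and supremum and infimum coincide at zero; the stated formulas still hold since the supremum is attained at $\bm x=\bm x^\prime$ with value $\sigma_f^2=0$. The closing sentence of the lemma (that the GP mean tends to zero and the variance stays bounded far from the training data) follows directly from substituting these bounds into the closed-form expressions \eqref{for:meanvalue} and \eqref{for:varvalue} and is what the subsequent proof of Theorem~\ref{theo:1} will exploit.
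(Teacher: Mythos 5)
Your proposal is correct: the paper itself offers no proof of this lemma (it simply cites \cite{beckers:ecc2016}), and your elementary argument --- monotonicity of $t\mapsto\sigma_f^2\exp(-t/(2l^2))$ on $[0,\infty)$ combined with the fact that $\Vert\bm x-\bm x^\prime\Vert^2$ ranges over exactly that interval --- is the standard verification, including the correct handling of the degenerate case $\sigma_f=0$ and the observation that the infimum is only attained in the limit.
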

\begin{proof}
We prove the mean square boundedness by evaluating the expected value~$\expval \left[\Vert \bm f(\bm x_{k})\Vert^2 \right]$ for each~$\bm x_{k}$. The expected value of a squared Gaussian distributed variable can be expressed by the addition of the squared mean and the variance.
\begin{align}
		\expval  \left[ \bm f^\top (\bm x_k) \bm f(\bm x_k)\right]=\sum_{i=1}^n &\mean_i^2(\bm x_{k+1}\vert \bm x_k,\mathcal D)\\
		+&\var_i(\bm x_{k+1}\vert \bm x_k,\mathcal D)\label{for:expfxfx}
\end{align}
If the squared mean~$\mean_i^2(\cdot)$ and the variance~$\var_i(\cdot)$ are bounded, then~\eqref{for:expfxfx} is bounded. The mean~$\mean_i(\bm x_{k+1}\vert \bm x_k,\mathcal D)$ is bounded with
\begin{align}
\Vert\mean_i(\bm x_{k+1}\vert \bm x_k,\mathcal D)\Vert&\leq \sigma_{i,f}^2\sqrt{m}\Vert \bm h(i)\Vert\\
\Rightarrow \mean_i^2(\bm x_{k+1}\vert \bm x_k,\mathcal D)&\leq \sigma_{i,f}^4 m \Vert \bm h(i)\Vert^2\\
\text{with }\bm h_i&=(K_{\varphi_i}(X,X)+I \sigma^2_{n,i})^{-1}Y_{:,i}.
\end{align}
by the application of the Cauchy-Schwarz inequality and lemma~\ref{lemma:bounds}. The variance
\begin{align}
\var_i(\bm x_{k+1}\vert \bm x_k,\mathcal D)&=k_{\varphi_i}(\bm x_k,\bm x_k)-\bm k_{\varphi_i}(\bm x_k,X)^\top \notag\\
& \phantom{{}=}(K_{\varphi_i}(X,X)+I \sigma^2_{n,i})^{-1} \bm k_{\varphi_i}(\bm x_k,X)
\end{align}
is also bounded by~$0\leq \var_i(\bm x_{k+1}\vert \bm x_k,\mathcal D) \leq \sigma_{i,f}^2$ because of lemma~\ref{lemma:bounds} and the positive definiteness of the matrix~$(K_{\varphi_i}(X,X)+I \sigma^2_{n,i})^{-1}$. Therefore, the solution~$\bm x_k$ with~$k>0$ of system~\eqref{for:stochasticsystem} is mean square bounded with
\begin{align}
\sup_{k\in\N_{>0}} \expval \left[ \left\Vert \bm x_k \right\Vert^2 \right]\leq \sum_{i=1}^n \sigma_{i,f}^4 m \Vert \bm h(i)\Vert^2+\sigma_{i,f}^2. \label{for:Eupperbound}
\end{align}
\label{proof:meansquarebounded} 
\end{proof}
This theorem can be interpreted as follows: Since the mean and the variance of $\bm x_{k+1}\vert \bm x_k$ are bounded it is only possible to learn bounded systems with a GP-SSM with squared exponential covariance function. This upper bound depends on the signal variance $\sigma_f$ and noise variance $\sigma_n$, the number of training points $m$, and their position. The value of the upper bound increases if the number of training points  or the values of the output training data $Y$ increase.\\
Now, we want to focus on the behavior of the trajectories of the system. For this purpose, we use the theory of Markov chains because the future state of the system \eqref{for:stochasticsystem} only depends on the current state and thus it is Markovian.
\begin{mytheo} 
For a GP-SSM~\eqref{for:stochasticsystem} with squared exponential covariance function 
\begin{align*}
k_{\varphi_i}(\bm x,\bm x^\prime)=\sigma_{i,f}^2 \exp{\left(-\frac{\Vert \bm x- \bm x^\prime \Vert^2}{2l_i^2} \right) },\,\bm x,\bm x'\in\X
\end{align*}
with~$\sigma_{i,f}\in\R_{\geq 0},l_i\in\R_{>0}$ for all~$i\in\{1,\ldots,n\}$, there exists a set
\begin{align*}
\Lambda=\{\bm x \in\X\vert \left\Vert \bm x \right\Vert^2\leq \sum_{i=1}^n \sigma_{i,f}^4 m \Vert \bm h(i)\Vert^2+\sigma_{i,f}^2\},
\end{align*}
which is positive recurrent.
\end{mytheo}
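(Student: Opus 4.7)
The plan is to apply a Foster--type drift/Lyapunov argument with $V(\bm x)=\Vert\bm x\Vert^2$, leveraging the pointwise bounds developed in the proof of Theorem~\ref{theo:1}. The key observation is that the estimates used there---the Cauchy--Schwarz bound on $\mean_i$ and the bound on $\var_i$ coming from Lemma~\ref{lemma:bounds}---are actually \emph{conditional} on $\bm x_k$. Consequently, for every $\bm x_k\in\X$,
\begin{align*}
\expval[V(\bm x_{k+1})\vert\bm x_k]&=\sum_{i=1}^n\mean_i^2(\bm x_{k+1}\vert\bm x_k,\mathcal D)+\var_i(\bm x_{k+1}\vert\bm x_k,\mathcal D)\\
&\leq c,
\end{align*}
with $c=\sum_{i=1}^n\sigma_{i,f}^4 m\Vert\bm h(i)\Vert^2+\sigma_{i,f}^2$---precisely the radius of $\Lambda$. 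This is a drift inequality tailor-made for the sublevel set $\Lambda=\{V\leq c\}$.

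First I would make this conditional bound explicit and observe that outside $\Lambda$ it gives $\expval[V(\bm x_{k+1})-V(\bm x_k)\vert\bm x_k]\leq c-V(\bm x_k)<0$, so $V$ is a strict supermartingale on $\Lambda^c$; inside $\Lambda$, the same bound still controls the conditional second moment of $\bm x_{k+1}$ by $c$. Second, I would convert this drift into a finite expected return time by exploiting the Gaussian structure of the transition kernel. Since $\bm x_{k+1}\vert\bm x_k\sim\mathcal{N}(\bm\mean(\cdot),\Var(\cdot))$ with $\Vert\bm\mean\Vert^2+\mathrm{tr}(\Var)\leq c$, the random variable $\Vert\bm x_{k+1}\Vert^2$ follows a generalized non-central chi-squared distribution whose median does not exceed its mean; this yields a uniform lower bound $P(\bm x_{k+1}\in\Lambda\vert\bm x_k)\geq\alpha>0$ valid for every $\bm x_k$. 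With this in hand, $\tau_\Lambda$ is stochastically dominated by a geometric random variable of parameter $\alpha$, and $\sup_{\bm x_0\in\Lambda}\expval[\tau_\Lambda]\leq 1/\alpha<\infty$, establishing positive recurrence of $\Lambda$.

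The main obstacle is precisely this last step. The drift $c-V(\bm x_k)$ is not bounded below by a strictly negative constant uniformly on $\Lambda^c$---it decays to $0$ as $V(\bm x_k)\downarrow c$---so the textbook Foster--Lyapunov criterion, which requires a constant negative drift off $\Lambda$, does not apply verbatim. The proposed remedy via the uniform positive probability of jumping into $\Lambda$ rests on showing that for every Gaussian vector $\bm X$ with $\Vert\bm\mean\Vert^2+\mathrm{tr}(\Var)\leq c$ the inequality $P(\Vert\bm X\Vert^2\leq\expval[\Vert\bm X\Vert^2])\geq\alpha$ holds uniformly in $(\bm\mean,\Var)$, i.e., on controlling the median of the associated generalized non-central chi-squared. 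This is the only nontrivial calculation; once it is in place, everything else follows from Theorem~\ref{theo:1}, Lemma~\ref{lemma:bounds}, and the Markov property of the recursion~\eqref{for:stochasticsystem}.
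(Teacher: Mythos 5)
Your route is genuinely different from the paper's. The paper proves this theorem by a Foster--Lyapunov drift argument: it quotes a criterion from Kushner stating that if $\expval[V(\bm x_{k+1})\vert\bm x_k]-V(\bm x_k)\leq -c_2<0$ whenever $V(\bm x_k)>c_1$ (and the drift is bounded on $\{V\leq c_1\}$), then $\{V\leq c_1\}$ is positive recurrent; it then takes $V(\bm x)=\bm x^\top\bm x$, bounds the drift by $c-V(\bm x_k)$ with $c=\sum_{i=1}^n\sigma_{i,f}^4 m\Vert\bm h(i)\Vert^2+\sigma_{i,f}^2$ exactly as in the proof of Theorem~\ref{theo:1}, and declares the drift negative outside $\Lambda=\{V\leq c\}$. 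Your diagnosis---that $c-V(\bm x_k)$ is negative but not uniformly bounded away from zero on the complement of $\Lambda$, so the textbook criterion with a constant $-c_2$ does not apply verbatim to the sublevel set of radius exactly $c$---is accurate, and in fact it applies to the paper's own proof as written; the standard repair inside the drift framework is to enlarge the set to $\{V\leq c+\epsilon\}$ for some $\epsilon>0$, which does satisfy Kushner's hypotheses. Your alternative repair via a uniform minorization $P(\bm x_{k+1}\in\Lambda\vert\bm x_k)\geq\alpha>0$ and geometric domination of $\tau_\Lambda$ is a legitimate and in some ways stronger strategy: it gives an explicit bound $1/\alpha$ on the expected return time from \emph{any} initial state, not only from $\Lambda$.

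However, your proposal has a genuine gap precisely where you flag it: the uniform lower bound $\alpha$ is never established. You reduce it to the claim that the median of a generalized non-central chi-squared variable never exceeds its mean, uniformly over all admissible $(\bm\mean,\Var)$; for arbitrary positive weights this is not a standard citable fact, and you offer no proof, so the decisive step of the argument is an assertion rather than a derivation. The bound can, however, be obtained elementarily from the structure of the model: conditionally on $\bm x_k$ the coordinates of $\bm x_{k+1}$ are independent Gaussians because $\Var(\bm x_{k+1}\vert\bm x_k,\mathcal D)$ is diagonal, and for a scalar $X\sim\mathcal N(\mu,s^2)$ one has
\begin{align}
P\left(X^2\leq\mu^2+s^2\right)\geq\Phi(0)-\Phi(-1)=\tfrac{1}{2}-\Phi(-1)>0,
\end{align}
since $\sqrt{\mu^2+s^2}\geq\vert\mu\vert$ and $\sqrt{\mu^2+s^2}+\vert\mu\vert\geq s$. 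Multiplying over the $n$ coordinates and using $\sum_{i=1}^n\left(\mean_i^2(\cdot)+\var_i(\cdot)\right)\leq c$ yields $P\left(\Vert\bm x_{k+1}\Vert^2\leq c\,\vert\,\bm x_k\right)\geq\left(\tfrac{1}{2}-\Phi(-1)\right)^n=:\alpha$ uniformly in $\bm x_k$, which closes your argument. Until some such computation is supplied, the proposal is incomplete at its load-bearing step.
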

\begin{proof}
First, we recall the criterion for positive recurrent sets. Positive recurrency guarantees that the system trajectory returns to a set in a finite time horizon.
\begin{mylemma}[\cite{kushner1971introduction}]
Suppose that there exists a positive definite Lyapunov function $V(\bm x)$ and positive constants~$c_1,c_2,c_3\in\R_{>0}$ such that
\begin{align}
\expval\left[V(\bm x_{k+1}\vert \bm x_k)\right]-V(\bm x_k)&\leq -c_2,&\text{if } V(\bm x)>c_1\\
\expval\left[V(\bm x_{k+1}\vert \bm x_k)\right]-V(\bm x_k)&\leq c_3<\infty,&\text{if } V(\bm x)\leq c_1
\end{align}
Then the set
\begin{align}
\Lambda=\left\lbrace \bm x \colon V(\bm x)\leq c_1 \right\rbrace
\end{align}
is positive recurrent.
\end{mylemma}
We assume the positive definite Lyapunov function
\begin{align}
V(\bm x)=\bm x^\top \bm x,\qquad\bm x\in\R^n.
\end{align} 
The drift of $V(\bm x)$ is given by
\begin{align}
		\Delta V&=\expval \left[ V(\bm x_{k+1}\vert\bm x_k)\right]-V(\bm x_k) \notag\\
		&=\expval \left[ \bm f^\top (\bm x_k) \bm f(\bm x_k)\right] - \bm x_k^\top \bm x_k.
\end{align}
An upper bound for~$\left[ \bm f^\top (\bm x_k) \bm f(\bm x_k)\right]$ is given by equation~\eqref{for:Eupperbound} which results in
\begin{align}
		\Delta V&\leq\sum_{i=1}^n \sigma_{i,f}^4 m \Vert \bm h(i)\Vert^2+\sigma_{i,f}^2-\bm x_k^\top \bm x_k.\label{for:deltav}
\end{align}
Due to the fact that~$\lim_{\left\Vert \bm x_k \right\Vert\to\infty} \bm x_k^\top \bm x_k=\infty$ is unbounded and equation~\eqref{for:deltav}, there must exist a set~$\Lambda$ with a neighbourhood~$\U=\R^n\backslash \left\lbrace\Lambda\right\rbrace$ which fulfills
\begin{align}
	 \Delta V<0,\,\bm x_k\in \U.
\end{align}
The drift~$\Delta V$ is negative if~$\bm x_k^\top \bm x_k>\expval \left[ \bm f^\top (\bm x_k) \bm f(\bm x_k)\right]$. Therefore, the set~$\Lambda$ is defined by
\begin{align}
\Lambda=\{\bm x \in\R^n\vert \left\Vert \bm x \right\Vert^2\leq \sum_{i=1}^n \sigma_{i,f}^4 m \Vert \bm h(i)\Vert^2+\sigma_{i,f}^2 \}.
\end{align}
Since the drift of the Lyapunov function is negative outside the set $\Lambda$, lemma 2 is fulfilled and thus the set is positive recurrent. 
\end{proof}
\section{Simulations}
\subsection{Equilibrium Distribution}
In this section, we present an examples of equilibrium distributions of a one-dimensional Gaussian Process State Space Model with squared exponential covariance function. The solution is validated by a Monte Carlo experiment and a two-sample Kolmogorov-Smirnov test.\\
We assume a system which is described by
\begin{align}
x_{k+1}=0.01x_k^3-0.2x_k^2+0.2x_k+\eta\label{for:nonlinearsys}
\end{align}
where~$\eta$ is standard normal distributed. A Gaussian Process State Space Model with squared exponential covariance function is trained with 20 input points which are uniformly distributed on the interval $[-5,5]$ and the corresponding output data. The output data is corrupted by a Gaussian noise with a variance of $\sigma_n=1$. The hyperparameters are optimized by maximizing the marginal likelihood. The optimized value of the lengthscale $l$ is $3.59$ and the signal noise $\sigma_f$ is $4.21$.\\
The predicted mean function (blue) and the variance (gray) of the GP-SSM are drawn in Fig.~\ref{fig:figure3}.
   \begin{figure}[htp]
\begin{tikzpicture}
\begin{axis}[
  xlabel={$x_k$},
  ylabel={$x_{k+1}$},
  line width=1pt,
  axis equal=true,
  grid style={dashed,gray},
  grid = both,
  xmin=-9.1, xmax=3, ymin=-11.1, ymax=5]
\addplot+[name path=varp, color=gray, no marks] table [x index=0,y index=2]{data/figure3_mean_var.dat};
\addplot+[name path=varm, color=gray, no marks] table [x index=0,y index=3]{data/figure3_mean_var.dat};
\addplot[gray,opacity=0.5] fill between[ of = varm and varp];
\addplot+[color=blue, no marks,line width=1.2pt] table [x index=0,y index=1]{data/figure3_mean_var.dat};
\addplot+[mark=+,color=green!70!black, only marks,mark size=4] table [x index=0,y index=1]{data/figure3_points.dat};
\addplot[const plot,fill=orange,draw=black] table [x index=0,y expr=\thisrowno{1}*60-11]{data/figure3_hist.dat};
\addplot[const plot, fill=blue!50!red!50!white, draw=black] table [x expr=\thisrowno{3}*60-12.5,y index=2]{data/figure3_hist.dat};
\addplot[color=black, no marks,line width=1.2pt] table [x index=0,y expr=\thisrowno{1}*60-11]{data/figure3_px.dat};
\addplot[color=black, no marks,line width=1.2pt] table [x expr=\thisrowno{1}*60-12.5,y index=0]{data/figure3_px.dat};
\end{axis}
\end{tikzpicture} 
      \normalfont{\vspace{-0.2cm}\caption{A GP-SSM with squared exponential function trained by 20 noisy data points (green crosses) of the nonlinear system~\eqref{for:nonlinearsys}. The GP Regression gives the resulting mean function (blue) and variance (gray). The black line at the bottom and on the left side of the figures describes the computed equilibrium distribution. A Monte Carlo experiment with the input samples (orange bars) based on the equilibrium distribution and the output samples (purple bars) supports that the distribution is an equilibrium.}\label{fig:figure3}}
   \end{figure}
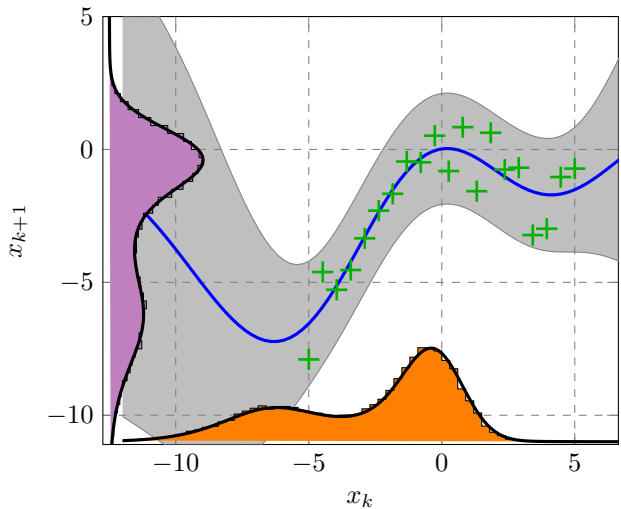
The equilibrium distribution (black line) is non-Gaussian shaped which is based on the strong nonlinear behavior of the mean function. To cover the relevant array of the distribution, we set the interval of the integral to~$[-12,8]$ and divide it in~$q=150$ parts. The determinant of~$M$ is zero and the active-set algorithm find a minimum for~$\Vert M_p\bm p-\bm b_p\Vert^2$ with holding the constrains. The optimization results in a residual value of~$2.3\text{\sc{e}-}6$. To validate the computed distribution, we use the inverse transform sampling method to generate 30000 sample out of this distribution. Since the inverse of the cumulative distribution function is necessary for the inverse transform sampling method, the discrete points~$p_i$ of the probability density function are numerically integrated. These samples are visualized by the orange bars at the bottom of Fig.~\ref{fig:figure3}. The purple bars on the left side show the output distribution of the samples. \\
The two-sample Kolmogorov-Smirnov test returns that it is not possible to reject the null hypothesis that the probability distribution are identical  at the 5\% significance level. The Monte Carlo experiment and the Kolmogorov-Smirnov test support the assumption that the calculated distribution function~$p(x_k)$ is a equilibrium distribution of the nonlinear discrete-time system.
\vspace{0.2cm}
\subsection{Stability}
This example shows the boundedness of the GP-SSM with squared exponential covariance function. We use the highly nonlinear Van der Pol oscillator as training system. The discretization of the oscillator is described by~\cite{van2010new} with 
\begin{align}
	x_{k+1}&=\phi(T,x_k,y_k,\epsilon) \Psi(x_k,y_k)T\notag\\
	&+(\varphi(T,x_k,y_k,\epsilon)+1)x_k+n_1\notag\\
	y_{k+1}&=\phi(T,x_k,y_k,\epsilon) \Lambda(x_k,y_k)T\notag\\
	&+(\varphi(T,x_k,y_k,\epsilon)+1)y_k+n_2
	\label{for:vanderpol}
\end{align}
where the sample time~$T$ is set to~$0.1$ and the parameter~$\epsilon$ to~$-0.8$. The functions $\phi(\cdot)$ and $\varphi(\cdot)$ are highly nonlinear which generates a non-conservative oscillator with nonlinear damping.\\
 A Gaussian distributed noise~$n_1,n_2\sim\mathcal{N}(0,0.01^2)$ is added to the output data set. The GP-SSM is trained with 441 uniformly distributed points on the square~$[-3,3]\times[-3,3]$. The hyperparameters are optimized by the minimization of the log-likelihood function with a conjugate gradient method. For the multi-step ahead prediction not only the mean but also the uncertainty is considered, see~\cite{girard2003gaussian}. Since the trajectory stays inside the training area, the predicted trajectory is very similar. The mean square boundedness of the trained GP-SSM is fulfilled.\\
The model is tested with two different set of initial points~$x_0$ and~$y_0$. For~$x_0=-1.8,y_0=0$, Fig.~\ref{fig:fig4} shows the trajectory of the system~\eqref{for:vanderpol} and the mean~$\bar{x}_k,\bar{y}_k$ with the~$2\sigma$ standard deviation of the multi-step ahead prediction of the trained GP-SSM. The predicted mean and the trajectory of~\eqref{for:vanderpol} are quite similar.\\
For the second example, the initial state of the system is changed to~$x_0=2.2,y_0=0$ which generates an unstable trajectory, see Fig.~\ref{fig:fig5}. Due to the fact that this initial point is not inside the attraction area of the Van der Pol oscillator, the trajectory~$x_k,y_k$ of the system is not bounded. Nevertheless, the GP-SSM generates a bounded mean and variance function. This test case is just done to demonstrate the boundedness of the GP-SSM. The increased variance shows the uncertainty of the prediction since the model can not generate the unstable trajectory.
   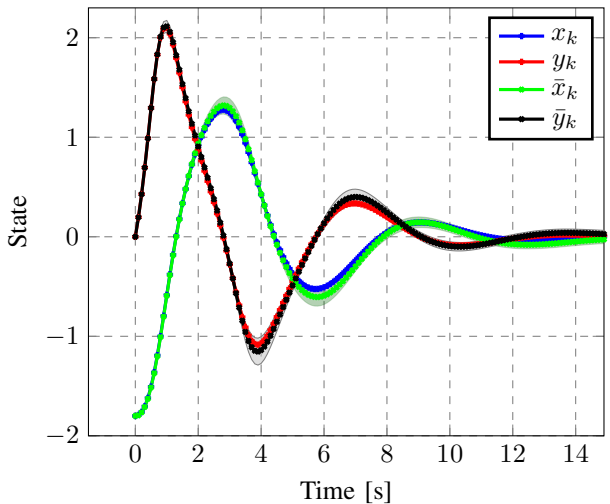
\begin{figure}[ht]
   \vspace{0.22cm}
	\begin{tikzpicture}
\begin{axis}[
  xlabel={Time [s]},
  ylabel=State,
  line width=1pt,
  grid style={dashed,gray},
  grid = both,
  ymin=-2,
  ymax=2.3,
  xmax=14.9]
  \addplot+[name path=varp1, color=black!60!green,opacity=0.2, no marks] table [x index=0,y expr=\thisrowno{1}+\thisrowno{3}]{data/figure4_mean_std.dat};
\addplot+[name path=varm1, color=black!60!green,opacity=0.2, no marks] table [x index=0,y expr=\thisrowno{1}-\thisrowno{3}]{data/figure4_mean_std.dat};
\addplot[black!60!green,opacity=0.3] fill between[ of = varm1 and varp1];
  \addplot+[name path=varp2, color=gray, no marks] table [x index=0,y expr=\thisrowno{2}+\thisrowno{4}]{data/figure4_mean_std.dat};
\addplot+[name path=varm2, color=gray, no marks] table [x index=0,y expr=\thisrowno{2}-\thisrowno{4}]{data/figure4_mean_std.dat};
\addplot[lightgray,opacity=0.5] fill between[ of = varm2 and varp2];
\addplot[mark=+,mark size=1.2,color=blue,line width=1.2pt] table [x index=0,y index=1]{data/figure4_points.dat};
\addplot[mark=+,mark size=1.2,color=red,line width=1.2pt] table [x index=0,y index=2]{data/figure4_points.dat};
\addplot[mark=x,color=green,mark size=1.2,line width=1.2pt] table [x index=0,y index=1]{data/figure4_mean_std.dat};
\addplot[mark=x,color=black, mark size=1.2,line width=1.2pt] table [x index=0,y index=2]{data/figure4_mean_std.dat};
\legend{,,,,,,$x_k$,$y_k$,$\bar{x}_k$,$\bar{y}_k$};
\end{axis}
\end{tikzpicture} 
      \normalfont{\vspace{-0.0cm}\caption{The mean~$\bar{x}_k,\bar{y}_k$ and the~$2\sigma$ standard deviation of the multi-step ahead prediction by a GP-SSM with squared exp. covariance function is always bounded. With~$x_0=-1.8,y_0=0$ the predicted mean and the trajectory of~\eqref{for:vanderpol} are quite similar. The variance of the prediction is low.\label{fig:fig4}}} 
      \vspace{-0.1cm}
   \end{figure}
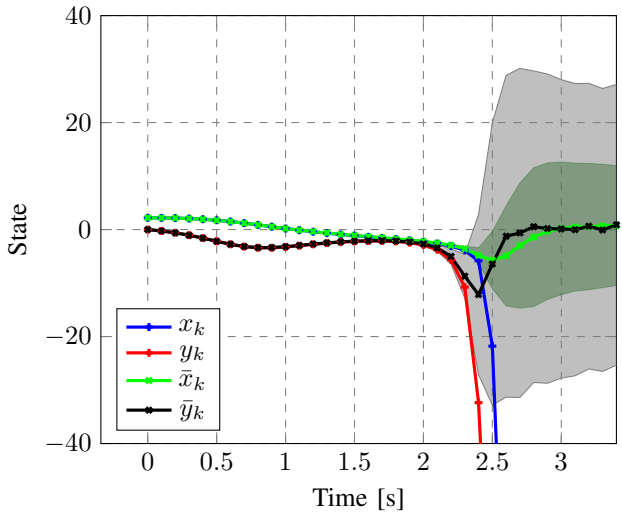
\begin{figure}[ht]
\begin{tikzpicture}
\begin{axis}[
  xlabel={Time [s]},
  ylabel=State,
  legend pos=south west,
  grid style={dashed,gray},
  grid = both,
    ymin=-40,
  ymax=40,
  xmax=3.4]
\addplot+[name path=varp1, color=black!60!green,opacity=0.3, no marks] table [x index=0,y expr=\thisrowno{1}+\thisrowno{3}]{data/figure5_mean_std.dat};
\addplot+[name path=varm1, color=black!60!green,opacity=0.3, no marks] table [x index=0,y expr=\thisrowno{1}-\thisrowno{3}]{data/figure5_mean_std.dat};
\addplot[black!60!green,opacity=0.5] fill between[ of = varm1 and varp1]; 
\addplot+[name path=varp3, color=gray, no marks] table [x index=0,y expr=\thisrowno{2}+\thisrowno{4}]{data/figure5_mean_std.dat};
\addplot+[name path=varm3, color=gray, no marks] table [x index=0,y expr=\thisrowno{2}-\thisrowno{4}]{data/figure5_mean_std.dat};
\addplot[gray,opacity=0.5] fill between[ of = varm3 and varp3];
\addplot[mark=+,mark size=1.5,color=blue,line width=1.2pt] table [x index=0,y index=1]{data/figure5_points.dat};
\addplot[mark=+,mark size=1.5,color=red,line width=1.2pt] table [x index=0,y index=2]{data/figure5_points.dat};
\addplot[mark=x,color=green,mark size=1.5,line width=1.2pt] table [x index=0,y index=1]{data/figure5_mean_std.dat};
\addplot[mark=x,mark size=1.5,color=black,line width=1.2pt] table [x index=0,y index=2]{data/figure5_mean_std.dat};
\legend{,,,,,,$x_k$,$y_k$,$\bar{x}_k$,$\bar{y}_k$};
\end{axis}
\end{tikzpicture} 
      \normalfont{\vspace{-0.0cm}\caption{The prediction of the mean~$\bar{x}_k,\bar{y}_k$ and the corresponding~$2\sigma$ standard deviation of a GP-SSM with squared exponential covariance function is always bounded even if the trajectory~$x_k,y_k$ of the original system is unbounded. For testing purpose, the GP-SSM should generate an unbounded trajecory. Since the GP-SSM is bounded, the trajectory of the true system is not reproduced.\label{fig:fig5}}}
      \vspace{-0.1cm}
\end{figure}
\section*{Conclusion}
\balance
In this paper, we present fundamental control properties of GP-SSMs from a stochastic point of view. In the first part, an algorithm for the computation of equilibrium distributions for Gaussian Process State Space Models is shown. The method bases on the solution of a Fredholm integral equation which is done by numerical approximation. The result is a system of linear equations and constraints to ensure that the solution is a valid probability distribution.\\
The second part deals with the proof of the mean square boundedness of a GP-SSM with squared exponential covariance function. We also show that there exists a set which is positive recurrent. Therefore, it is only possible to learn bounded systems with a GP-SSM with squared exponential covariance function. The computation of equilibrium distributions is validated in a simulation which uses input sample points that are generated of the equilibrium distribution. A simulation of a discrete Van der Pol oscillator shows the mean square boundedness.

\section*{ACKNOWLEDGMENTS}
The research leading to these results has received funding from the European Research Council under the European Union Seventh Framework Program (FP7/2007-2013) / ERC Starting Grant ``Control based on Human Models (con-humo)'' agreement n\textsuperscript{o}337654.

\bibliography{mybib}

\begin{thebibliography}{10}

\bibitem{wang2008gaussian}
J.~M. Wang, D.~J. Fleet, and A.~Hertzmann, ``{Gaussian} process dynamical
  models for human motion,'' {\em Pattern Analysis and Machine Intelligence,
  IEEE Transactions on}, vol.~30, no.~2, pp.~283--298, 2008.

\bibitem{petelin2013evolving}
D.~Petelin, A.~Grancharova, and J.~Kocijan, ``Evolving {Gaussian} process
  models for prediction of ozone concentration in the air,'' {\em Simulation
  modelling practice and theory}, vol.~33, pp.~68--80, 2013.

\bibitem{frigola2014variational}
R.~Frigola, Y.~Chen, and C.~Rasmussen, ``Variational {Gaussian} process
  state-space models,'' in {\em Advances in Neural Information Processing
  Systems}, pp.~3680--3688, 2014.

\bibitem{rasmussen2006gaussian}
C.~E. Rasmussen, {\em {Gaussian} processes for machine learning}.
\newblock Citeseer, 2006.

\bibitem{kocijan2005dynamic}
J.~Kocijan, A.~Girard, B.~Banko, and R.~Murray-Smith, ``Dynamic systems
  identification with {Gaussian} processes,'' {\em Mathematical and Computer
  Modelling of Dynamical Systems}, vol.~11, no.~4, pp.~411--424, 2005.

\bibitem{rogers2011adaptive}
A.~Rogers, S.~Maleki, S.~Ghosh, and N.~R. Jennings, ``Adaptive home heating
  control through {Gaussian} process prediction and mathematical programming,''
  in {\em Second International Workshop on Agent Technology for Energy Systems
  (ATES 2011)}, pp.~71--78, May 2011.

\bibitem{wang2005gaussian}
J.~Wang, A.~Hertzmann, and D.~M. Blei, ``{Gaussian} process dynamical models,''
  in {\em Advances in neural information processing systems}, pp.~1441--1448,
  2005.

\bibitem{chowdhary2013bayesian}
G.~Chowdhary, H.~A. Kingravi, J.~P. How, and P.~A. Vela, ``Bayesian
  nonparametric adaptive control of time-varying systems using {Gaussian}
  processes,'' in {\em American Control Conference (ACC), 2013},
  pp.~2655--2661, IEEE, 2013.

\bibitem{beckers:ecc2016}
T.~Beckers and S.~Hirche, ``Stability of {Gaussian} process state space
  models,'' in {\em Proceedings of the European Control Conference (ECC)},
  2016.

\bibitem{medina2015synthesizing}
J.~R. Medina, T.~Lorenz, and S.~Hirche, ``Synthesizing anticipatory haptic
  assistance considering human behavior uncertainty,'' {\em Robotics, IEEE
  Transactions on}, vol.~31, no.~1, pp.~180--190, 2015.

\bibitem{kocijan2016modelling}
J.~Kocijan, {\em Modelling and Control of Dynamic Systems Using {Gaussian}
  Process Models}.
\newblock Springer, 2016.

\bibitem{avzman2008non}
K.~A{\v{z}}man and J.~Kocijan, ``Non-linear model predictive control for models
  with local information and uncertainties,'' {\em Transactions of the
  Institute of Measurement and Control}, vol.~30, no.~5, pp.~371--396, 2008.

\bibitem{bishop2006pattern}
C.~M. Bishop {\em et~al.}, {\em Pattern recognition and machine learning},
  vol.~4.
\newblock springer New York, 2006.

\bibitem{girard2003gaussian}
A.~Girard, C.~Rasmussen, J.~Quinonero-Candela, and R.~Murray-Smith,
  ``{Gaussian} process priors with uncertain inputs? {Application} to
  multiple-step ahead time series forecasting,'' in {\em Proceedings of Neural
  Information Processing Systems}, MIT Press, 2003.

\bibitem{jerri1999introduction}
A.~Jerri, {\em Introduction to integral equations with applications}.
\newblock John Wiley \& Sons, 1999.

\bibitem{stoer2013introduction}
J.~Stoer and R.~Bulirsch, {\em Introduction to numerical analysis}, vol.~12.
\newblock Springer Science \& Business Media, 2013.

\bibitem{atkinson1997numerical}
K.~E. Atkinson, {\em The numerical solution of integral equations of the second
  kind}, vol.~4.
\newblock Cambridge university press, 1997.

\bibitem{kushner1971introduction}
H.~J. Kushner, {\em Introduction to stochastic control}.
\newblock Holt, Rinehart and Winston New York, 1971.

\bibitem{van2010new}
T.~N. Van and N.~Hori, ``A new discrete-time model for a van del pol
  oscillator,'' in {\em Proceedings of SICE Annual Conference 2010},
  pp.~2699--2704, IEEE, 2010.

\end{thebibliography}
\bibliographystyle{ieeetr}

\end{document}